\documentclass[11pt]{article}
\usepackage{fullpage}
\usepackage{amssymb}
\usepackage{amsmath,amsthm}
\usepackage{multirow}
\usepackage{color}
\usepackage{xspace}
\usepackage{clrscode}

\usepackage{ifpdf}
\ifpdf    
\usepackage{hyperref}
\else    
\usepackage[hypertex]{hyperref}
\fi
\allowdisplaybreaks

\newcommand{\remove}[1]{}
 \parskip=-0.027in

\providecommand{\card}[1]{\lvert#1\rvert}

\newtheorem{theorem}{Theorem}[section]

\newtheorem{cor}{Corollary}[section]
\newtheorem{lemma}[theorem]{Lemma}

\def\ot{\ensuremath{{\tilde O}}}
\newcommand{\noise}{approximate }

\newcommand{\comment}[1]{}
\newcommand{\junk}[1]{}

\begin{document}
{
\title{Breaking the Variance:  Approximating the Hamming Distance\\ in $\ot(1/\epsilon)$ Time Per Alignment}

\author{
Tsvi Kopelowitz\thanks{University of Michigan. Supported by NSF Grants CCF-1217338, CNS-1318294, and CCF-1514383.}
\and
Ely Porat\thanks{Bar-Ilan University. }
}

\maketitle

\thispagestyle{empty}
\setcounter{page}{0}

\begin{abstract}
The algorithmic tasks of computing the \emph{Hamming distance} between a given pattern of length $m$ and each location in a text of length $n$ is one of the most fundamental algorithmic tasks in string algorithms. Unfortunately, there is evidence that for a text $T$ of size $n$ and a pattern $P$ of size $m$, one cannot compute the exact Hamming distance for all locations in $T$ in time which is less than $\ot(n\sqrt m)$. However, Karloff~\cite{karloff} showed that if one is willing to suffer a $1\pm\epsilon$ approximation, then it is possible to solve the problem with high probability, in $\ot(\frac n {\epsilon^2})$ time.

Due to related lower bounds for computing the Hamming distance of two strings in the one-way communication complexity model, it is strongly believed that obtaining an algorithm for solving the approximation version cannot be done much faster as a function of $\frac 1 \epsilon$. We show here that this belief is \emph{false} by introducing a new $\ot(\frac{n}{\epsilon})$ time algorithm that succeeds with high probability.

The main idea behind our algorithm, which is common in sparse recovery problems, is to reduce the variance of a specific randomized experiment by (approximately) separating heavy hitters from non-heavy hitters. However, while known sparse recovery techniques work very well on vectors, they do not seem to apply here, where we are dealing with
mismatches between {\em pairs} of characters.
We introduce two main algorithmic ingredients. The first is a new sparse recovery method that applies for pair inputs (such as in our setting). The second is a new construction of hash/projection functions, which
allows to count the number of projections that induce mismatches
between two characters {\em exponentially} faster than brute force. We expect that these algorithmic techniques will be of independent interest.

\end{abstract}

\newpage
\section{Introduction}\label{sec:intro}
One of the most fundamental family of problems in string algorithms is to compute the distance between a given pattern $P$ of length $m$ and each location in given larger text $T$ of length $n$ both, over alphabet $\Sigma$, under some string distance metric (See~\cite{LEVENSHTEIN,FP74,ALPU05,LP08,swap-ALP00,AEP04,AABLLPSV06,AHKLP07,PE08,BLPP07,PPZ08,lw-75,AKO10,BJKK04,SP09,PP09,CM02,BI15,CEPR:2012,CP:2010, CEPR:2010,CEPR:2009,AAKLP08,AAILP07,PL07}). The most important distance metric in this setting is the \emph{Hamming Distance} of two strings, which is the number of aligned character mismatches between the strings. Let $\texttt{HAM}(X, Y)$ denote the Hamming distance of two strings $X$ and $Y$.
Abrahamson~\cite{Abrahamson} showed an algorithm whose runtime is $\ot(n\sqrt m)$.
The task of obtaining a faster upper bound seems to be very challenging, and indeed there is a folklore matching conditional lower bound for combinatorial algorithms based on the hardness of combinatorial boolean matrix multiplication (see~\cite{CliffordFolklore}).
However, for constant sized alphabets the runtime can be reduced to $\ot(n)$ using a constant number of convolution computations (which are implemented via the FFT algorithm)~\cite{FP74}.

This naturally lead to approximation algorithms for computing the Hamming distance in this setting, which is the problem that we consider here and is defined as follows.
Denote $T_j = T[j,\ldots,j+m-1]$. In the \textit{pattern-to-text approximate Hamming distance problem}
the input is a parameter $\epsilon >0$, $T$, and $P$. The goal is to compute for all locations $i\in [1, n-m+1]$ a value $\delta_i$ such that $(1-\epsilon)\texttt{\texttt{HAM}}(T_i, P)\leq  \delta_i \leq (1+\epsilon)\texttt{HAM}(T_i, P)$. For simplicity we assume without loss of generality that $\Sigma$ is the set of integers $\{1,2,\ldots,|\Sigma|\}$.

Karloff in~\cite{karloff} utilized the efficiency of the algorithm for constant sized alphabets to introduce a beautiful randomized algorithm for solving the pattern-to-text approximate Hamming distance problem, by utilizing projections of $\Sigma$ to binary alphabets. Karloff's algorithm runs in $\ot(\frac{n}{\epsilon^2})$ time, and is correct with high probability.

\paragraph{Communication complexity lower bounds.}
One of the downsides of Karloff's algorithm is the dependence on $\frac{1}{\epsilon^2}$. In particular, if one is interested in a one percent approximation guarantee, then this term becomes 10000!
However, many believe that beating the runtime of Karloff's algorithm is not possible, mainly since there exist qualitatively related lower bounds for estimating the Hamming distance of two equal length strings (for a single alignment). In particular, Woodruff~\cite{Woodruff04} and later Jayram, Kumar and Sivakumar~\cite{JKS08} showed that obtaining a $(1\pm\epsilon)$ approximation for two strings in the one-way communication complexity model requires sending $\Omega(1/\epsilon^2)$ bits of information. This lower bound implies a lower bound for the sketch size of the Hamming distance and some other streaming problems.

Our results here show that \textbf{this intuition is flawed}, by introducing an $\ot(\frac{n}{\epsilon})$ time algorithm that succeeds with high probability.

\paragraph{The challenge -- beating the variance.}
The main idea of Karloff's algorithm is to project $\Sigma$ to a binary alphabet, and compute the Hamming distance for each location of the projected text and the projected pattern. For a given location denote by $d$ the Hamming distance of the pattern at this location. While the projected Hamming distance is expected to be $\frac d 2$, the variance of the projected  distance could be as high as $\Omega(d^2)$ (see Section~\ref{sec:karloff} for a detailed calculation). In order to overcome this high variance Karloff's algorithm makes use of $\ot(\frac{1}{\epsilon^2})$ projections. More detail is given in Section~\ref{sec:karloff}.

The first step in obtaining a more efficient algorithm is to somehow reduce the number of projections that an algorithm would use. One line of attack would be to somehow reduce the variance. Indeed, such approaches have been considered in other problems~\cite{KNPW11}, and even for the problem considered here, Atallah, Grigorescu, and Wu in~\cite{AGW13} managed to slightly reduce the variance in some cases. They do this by computing the exact contribution to the Hamming distance of the $k$ most frequent characters in the pattern and approximating the contribution of the rest of the characters. This reduces the variance to $O(d\min(m\frac m k,d)$ for each projection, and by repeating the process $k$ times the variance of the average result becomes $d\min(\frac m k,d)/k$. This approach is only useful when the Hamming distance is high -- at least $\frac m k$.

However our goal here is to obtain an even faster algorithm, and so we devise a new method for reducing the variance.

\subsection{Our results and techniques.}
Our main result is the first significant improvement on this problem in the last over 20 years. We present a new randomized algorithm that solves the pattern-to-text approximate Hamming distance with high probability (at least $1-n^{-\Omega(1)}$) that runs in worst-case $\ot(\frac{n}{\epsilon})$ time. This is summarized in the following Theorem.
\begin{theorem}
There exists an algorithm that with high probability solves the pattern-to-text approximate Hamming distance problem and runs in $O(\frac n \epsilon \log \frac 1 \epsilon  \log n \log m \log |\Sigma|)$ time.
\end{theorem}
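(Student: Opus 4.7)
My plan is to design projections and a heavy-hitter scheme so that after subtracting off the contribution of a small number of \emph{heavy character pairs} at each location, the per-projection variance of a Karloff-style random experiment drops from $\Theta(d_i^2)$ to $O(\epsilon d_i^2)$, which means $\tilde O(1/\epsilon)$ projections suffice. Concretely, for location $i$ with $\texttt{HAM}(T_i,P)=d_i$, call an ordered pair $(a,b)\in\Sigma^2$ heavy at $i$ if it contributes more than $\epsilon d_i$ mismatches; there are at most $1/\epsilon$ such pairs per location, and if the heavy-pair multiplicities $c_i(a,b)$ are known exactly, the residual light Karloff experiment has variance $\sum c_i(a,b)^2 \le (\epsilon d_i)\cdot d_i$ per projection. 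Averaging $R=\tilde O(1/\epsilon)$ independent projections and invoking Chernoff/Bernstein then yields additive error $\epsilon d_i$ with probability $1-1/\mathrm{poly}(n)$, which survives a union bound over locations.

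The first nontrivial component is a \emph{pair-signal} heavy-hitter scheme that, across all $n$ alignments simultaneously, lists the $\le 1/\epsilon$ heavy pairs at each $i$ together with their exact counts in $\tilde O(n/\epsilon)$ total time. I would bucket pairs by composing independent hash functions $h_T,h_P\colon \Sigma\to[B]$ applied separately to $T$ and $P$; for each bucket pair $(u,v)$ the convolution counting how many aligned positions land in $(u,v)$ is a binary-alphabet instance solvable in $\tilde O(n)$ time by FFT. An iterative isolate-and-test procedure, repeated $O(\log n)$ times for high-probability recovery, then extracts the heavy pairs and their multiplicities at every location.

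The second nontrivial component is efficient \emph{unprojection} of heavy pairs. Each Karloff projection sum at location $i$ picks up a spurious contribution $\sum_{(a,b)\text{ heavy at }i} c_i(a,b)\cdot N(a,b)$ where $N(a,b)=|\{j:\pi_j(a)\ne\pi_j(b)\}|$, which must be subtracted to recover the light residual. Over the $\Theta(n/\epsilon)$ heavy (pair, location) incidences, evaluating $N(a,b)$ naively costs $R=\tilde\Omega(1/\epsilon)$ per pair, blowing the budget to $\tilde O(n/\epsilon^2)$. I would therefore engineer the family of projections so that $N(a,b)$ is queryable in $O(\log R)$ time per pair; morally, the $R$ projections are indexed by the leaves of a balanced binary tree with $O(\log R)$ bits of structure attached to each character, so that for $a\ne b$ the count $N(a,b)$ is recoverable from $O(\log R)$ precomputed counters.

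The overall bookkeeping then yields $R=\tilde O(1/\epsilon)$ binary convolutions at $\tilde O(n)$ each, plus $\tilde O(n/\epsilon)$ for the sparse recovery and exact counting, plus $\tilde O(n/\epsilon)$ for heavy-pair unprojection, matching the target $O(\tfrac{n}{\epsilon}\log\tfrac{1}{\epsilon}\log n\log m\log|\Sigma|)$. The main obstacle I expect is the pair-signal recovery in the second component: classical vector-valued heavy-hitter machinery does not obviously adapt to signals indexed by $\Sigma^2$ and spread across $n$ alignments, and fitting both the recovery and the exact-count step into the same $\tilde O(n/\epsilon)$ budget is where most of the delicate work will sit; the structured projection family of the third component is also subtle but feels more amenable to a direct combinatorial construction once the tree-indexing view is in place.
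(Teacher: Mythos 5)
Your high-level strategy is the same as the paper's: correct for heavy character pairs so the per-projection variance drops to $O(\epsilon d^2)$, use $\tilde O(1/\epsilon)$ projections, and build a structured projection family so that $N(a,b)=|\{j:\pi_j(a)\neq\pi_j(b)\}|$ is queryable in $O(\log R)$ time (your tree-indexed family is, in spirit, the paper's construction: XORs of $\log k$ pairs of 4-wise independent base functions, with a product-form tree recurrence for $\beta_{u,v}$). The genuine gap is in the pair heavy-hitter component. First, hashing text and pattern with \emph{independent} functions $h_T,h_P:\Sigma\to[B]$ contaminates the bucket counts with aligned \emph{matches}: a position where $T_i$ and $P$ agree on a character $\sigma$ lands in bucket $(h_T(\sigma),h_P(\sigma))$, and once $|\Sigma|$ is of order $B^2\log B$ these ``diagonal'' images cover essentially all of $[B]\times[B]$ (coupon collector), so nearly every bucket is inflated by a quantity unrelated to, and potentially far larger than, $d_i$; this breaks both identification and counting. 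The paper's fix is to couple the two projections ($\pi_i(\sigma)=\tau_i(\sigma)\bmod r_i$), discard any projected pair whose preimage contains some $(\sigma,\sigma)$, and show a genuine mismatching pair is discarded with probability at most $1/32$, so $O(\log n)$ repetitions suffice. Second, your plan insists on recovering heavy pairs \emph{with exact counts}; bucket counts are inherently noisy (other nonzero entries of the alignment matrix collide into the same bucket, and there can be up to $m$ of them), and cleaning them up with a per-pair convolution afterwards does not fit in $\tilde O(n/\epsilon)$ because the heavy pairs vary by location. Exactness is also unnecessary: the paper only builds a sparse matrix $D'$ with at most $3/\epsilon$ nonzeros satisfying the $L_2$ guarantee $\sum_{u,v}(d_{u,v}-d'_{u,v})^2\le b\epsilon d^2$, and plugs it into an estimator whose $D'$-correction has zero expectation for \emph{any} fixed $D'$, so errors in $D'$ affect only the variance. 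This relaxation is the key idea your proposal is missing.

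Moreover, obtaining that $L_2$ bound is where the real work sits, and a single bucket granularity $B\approx 1/\sqrt{\epsilon}$ does not suffice: collisions sharing a row or column contribute expected noise about $(w(u)+w(v))\sqrt{\epsilon}$ to the pair $(u,v)$, which can be of order $d\sqrt{\epsilon}\gg\epsilon d$. The paper sweeps $O(\log\frac1\epsilon)$ rectangular shapes $\ell_i\times r_i$ with $\ell_i r_i=\Theta(1/\epsilon)$, takes the minimum estimate over scales and over $O(\log n)$ repetitions, and gets per-pair noise $O(\sqrt{\epsilon\,w(u)w(v)})+O(\epsilon d)$, which sums to $O(\epsilon d^2)$ via $\sum_{u,v}w(u)w(v)=d^2$; it also needs a bit-tester (error-correcting-code) step to name a pair in the preimage of a bucket, and a filtering step to keep only the top $3/\epsilon$ entries. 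Finally, a smaller point: with the structured projection family the $h_i$ are only pairwise independent among themselves, so Chernoff/Bernstein is not available; the paper uses Chebyshev for constant success probability and takes a median over $O(\log n)$ independent executions to reach $1-n^{-\Omega(1)}$.
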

Furthermore, we introduce two exciting novel techniques in our algorithms.

\paragraph{Intuition.} The main intuition (which may not reveal the technical challenges) is to reduce the variance of the estimation produced by a single projection, by removing pairs of characters with a large contribution to the Hamming distance. In particular, for Hamming distance $d$ we would like to remove pairs that contribute at least $\epsilon d$. Such pairs are called \emph{heavy hitter pairs} and there are at most $\frac 1 \epsilon$ such pairs at each location. Unfortunately, it is not clear how to detect the heavy hitter pairs for each location within the time bounds that we are aiming for.
A common method for quickly approximating the heavy hitters in a given vector is sparse recovery~\cite{GLPS14,GNPRS13,PS12,GLPS10,CCF02,CM04,GSTV07,IR08}. The idea is to approximate the heavy hitters of a vector, while suffering from some extra noise in the form of some additional points in the vector, but not more than $O(\frac 1 \epsilon)$ points.
However, since in our setting the heavy hitters are pairs of elements (as opposed to single elements in a vector), there are structural constraints that make known sparse recovery techniques irrelevant here.

\paragraph{Approximating heavy hitter pairs.} To overcome these structural constraints, we introduce an algorithm that utilizes a \emph{specially constructed small set of projections}. We roughly show that with high probability the $L_2$ distance of the approximate heavy hitter pairs and the actual heavy hitter pairs is small. This in turn allows us to obtain a linear time algorithm which estimates the Hamming distance with variance at most $O(\epsilon d^2)$, and so $\ot(\frac 1 \epsilon)$ repetitions suffice. This construction can be found in Section~\ref{sec:HH}.

\paragraph{Computing all of the projections quickly.} While the approximation of the heavy hitter pairs implies that the number of repetitions can be low, the algorithm that we use will still need to project the $O(\frac 1 \epsilon)$ pairs in the approximation with each of the $\ot(\frac 1 \epsilon)$ projections. Doing this directly will cost $\ot(\frac 1 {\epsilon^2})$ time per location which is too high.
To overcome this, we make use of the way in which our algorithm uses the projected outcomes of the approximating pairs, by constructing a specially tailored set of projections. This construction may be of independent interest, and is detailed in Section~\ref{sec:projections}.
Given $k$ we construct $k$ hash functions $h_1,...h_k$ into the binary alphabet, such that given $x\neq y$ we can count for how many functions $h_i$ we have $h_i(x) = h_i(y)$ in $O(\log k)$ time, rather than $O(k)$ time.

\section{Preliminaries}\label{sec:prelim}
For a given location $j$, consider the alignment of $T_j$ with $P$. This alignment naturally defines an alignment matrix $D=D_j=\{d_{u,v}\}_{u,v\in \Sigma}$, such that for $u\neq v$ we have $d_{u,v}= \card{\{0\leq i \leq m-1 : T[j+i]= u \wedge P[i] = v\}}$ and $0$ otherwise. In words, $d_{u,v}$ is the contribution of the pair $(u,v)$ to the Hamming distance of $T_j$ and $P$.
Clearly, the Hamming distance is $d=\sum_{u,v\in \Sigma} d_{u,v}$.
We emphasize that computing or representing the alignment matrix explicitly is too costly in our setting.

\paragraph{Local versus global operations.}
The operations that our algorithm performs during the computation of the Hamming distance at some location $j$ can be partitioned into two types. The first type are \textit{local} operations which are independent of the computations performed for other locations in $T$. The second type are \textit{global} operations, which in order to be done efficiently may consider the alignments at other locations in $T$. In particular, all of the global operations in our algorithm can be reduced to computing the number of times that a 1 in a projection of $T_j$ to a binary alphabet aligns with a 1 in a projection of $P$ to a binary alphabet (the projection function is not required to be the same for $T_j$ and $P$). Such a computation will make use of the following Theorem.
\begin{theorem}\label{thm:FFT}
Given a binary text $T$ of size $n$ and a binary pattern $P$ of size $m$, there exists an $O(n\log m)$ time algorithm that computes for all locations $i$ in $T$ the number of times that a 1 in $T_i$ is aligned with a $1$ in $P$.
\end{theorem}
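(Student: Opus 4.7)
The plan is to reduce the stated counting problem to computing a cross-correlation of two binary sequences, and then apply the standard FFT-based convolution algorithm in a block-wise fashion so that the running time depends on $\log m$ rather than $\log n$.

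First I would reformulate the quantity to compute. For every valid alignment $i$, the number of positions where a $1$ in $T_i$ is aligned with a $1$ in $P$ is exactly
\[
c[i] = \sum_{k=0}^{m-1} T[i+k]\cdot P[k].
\]
Viewing $T$ and $P$ as integer sequences (with entries in $\{0,1\}$), this is the standard cross-correlation of $T$ with $P$, which in turn equals the convolution of $T$ with the reversed pattern $\widetilde{P}$ defined by $\widetilde{P}[k] = P[m-1-k]$. Any FFT-based polynomial multiplication algorithm computes the convolution of two sequences of combined length $N$ in $O(N \log N)$ arithmetic operations on $O(\log N)$-bit numbers, so applying it directly to $T$ and $\widetilde{P}$ gives all values $c[i]$ in $O(n \log n)$ time, which is not quite what we need.

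To shave the $\log n$ down to $\log m$, I would use the well-known overlap-add (or overlap-save) trick. Split the text $T$ into $\lceil n/m \rceil$ consecutive blocks $B_1, B_2, \ldots$ of length $m$, and for each block form the extended window $\widehat{B}_j$ consisting of $B_j$ concatenated with $B_{j+1}$, of total length $2m$. Each alignment $i \in [1, n-m+1]$ lies entirely inside exactly one such window $\widehat{B}_j$, so the value $c[i]$ is fully determined by the convolution of $\widehat{B}_j$ with $\widetilde{P}$. Computing this single convolution via FFT takes $O(m \log m)$ time; summing over the $O(n/m)$ blocks yields the advertised $O(n \log m)$ total running time.

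The main step that requires a little care, rather than an outright obstacle, is handling the boundaries between consecutive blocks so that every alignment is covered exactly once and no alignment's window runs off the end of $T$ (pad $T$ with zeros if necessary); the zero padding does not affect the count because a $0$ contributes nothing to the inner product. Correctness then follows because each $c[i]$ is read off from a coefficient of an exact polynomial product, and the binary entries together with $m \le n$ ensure that all intermediate integers fit in $O(\log m)$ bits, so the arithmetic cost assumed by FFT is justified.
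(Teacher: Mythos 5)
Your proposal is correct and matches the paper's approach: the paper simply invokes the standard convolution-via-FFT computation for this theorem, and your block decomposition (convolving each length-$2m$ window with the reversed pattern) is exactly the standard way that folklore $O(n\log m)$ bound is realized. No gaps.
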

The algorithm for Theorem~\ref{thm:FFT} is implemented via a single convolution (using the FFT) in $O(n\log m)$ time, and so can charge an $O(\log m)$ time cost to each location for each global operation. We emphasize that the efficiency of the algorithm in Theorem~\ref{thm:FFT} is only relevant when the projections to binary alphabets are the same for all locations $j$, which indeed will be the case in our algorithm.

\paragraph{Simplifying assumptions.}
With the goal of easing the presentation of our algorithm, we focus on estimating the Hamming distance between $T_j$ and $P$, and show that the number of global and local operations that our algorithm performs for this location is $\ot(1/\epsilon)$. In particular,
we will use $D$ throughout to refer to the alignment matrix $D_j$, omitting the subscript $j$.
We also emphasize that since we are interested in algorithms that succeed with high probability (at least $1-\frac{1}{n^{\Theta(1)}}$) then it suffices to show that with high probability the algorithm succeeds at location $j$.
Furthermore, since we are ignoring poly--log factors, we will only show that the algorithm succeeds with probability which is strictly larger than $\frac 1 2$ by at least some constant. The median of the estimations of $\Theta(\log n)$ independent executions of the algorithm guarantees success with high probability.

\section{Karloff's Algorithm}\label{sec:karloff}

\begin{figure}
\begin{codebox}
\Procname{$\proc{Karloff}(T_j,P,\epsilon)$}
\li $k \leftarrow O(\frac 1 {\epsilon^2})$
\li construct a p.w.i set of 4-wise independent hash functions $h_1,h_2,\ldots, h_{k}$ : $\Sigma\rightarrow \{0,1\}$.
\li \For $i = 1$ to $k$
\li \Do compute $x_i = \texttt{HAM}(h_i(T_j),h_i(P))$.
    \End
\li $X^*=2\frac{\sum_{i=1}^{k}x_i }{k}$
\li return $X^*$
\end{codebox}
\caption{Karloff's Algorithm.}\label{alg:Karloff-Algorithm}
\end{figure}

\begin{figure}
\begin{codebox}
\Procname{$\proc{Approx-Hamm-Distance}(T_j,P,\epsilon)$}
\li $k \leftarrow \frac {8b} {\epsilon}$
\li Construct-$D'(T_j,P,\epsilon)$
\li construct a p.w.i set of 4-wise independent hash functions $h_1,h_2,\ldots, h_{k}$ : $\Sigma\rightarrow \{0,1\}$.
\li $X^*=\sum_{i=1}^{k}\left(\texttt{HAM}(h_i(T_j),h_i(P)) + \frac 1 2 \sum_{\substack{u,v\\ h_i(u)=h_i(v)}}d'_{u,v} - \frac 1 2 \sum_{\substack{u,v\\ h_i(u)\neq h_i(v)}}d'_{u,v}\right) /(k/2)$
\li return $X^*$
\end{codebox}
\caption{The new Algorithm.}\label{alg:new_algo}
\end{figure}

Karloff in \cite{karloff} presented an algorithm for solving pattern-to-text approximate Hamming distance that runs in $\ot(\frac{n}{\epsilon^2})$ time, and is correct with high probability. We present an overview of (a simplified version of) Karloff's algorithm as understanding it is helpful for the setup of the new algorithm presented here. The pseudo-code for the algorithm is given in Figure~\ref{alg:Karloff-Algorithm}.

The only global operation is to compute the Hamming distance in line 4, which happens $O(\frac{1}{\epsilon^2})$ times for a total of $O(\frac{1}{\epsilon^2}\log m)$ time. The rest of the operations are all local and cost $O(\frac{1}{\epsilon^2})$ time.
For any $x_i$ as computed in line 4, the expected value of $x_i$ is $E[x_i]=\frac d 2$, and since each hash function is 4-wise independent, the variance is

{

\begin{align}
\nonumber V[x_i] &= \sum_{\substack{u<v }} V\left[\alpha_{u,v} (d_{u,v}+ d_{v,u})\right]
 = \sum_{\substack{u<v }} E\left[\left(\alpha_{u,v}(d_{u,v}+ d_{v,u}) - E\left[(\alpha_{u,v})(d_{u,v}+ d_{v,u}) \right]\right)^2\right]\\
\nonumber & = \sum_{\substack{u<v }} E\left[\left(\alpha_{u,v}(d_{u,v}+ d_{v,u})  - \frac 1 2 (d_{u,v}+ d_{v,u})\right)^2\right]\\
\nonumber & = \sum_{\substack{u<v }}  \Pr\left[\alpha_{u,v}=1\right]\cdot \left(\frac 1 2 (d_{u,v}+ d_{v,u})\right)^2 +\Pr\left[\alpha_{u,v}=0\right]\cdot \left(  - \frac 1 2 (d_{u,v}+ d_{v,u})\right)^2 \\
\nonumber & = \sum_{\substack{u<v }} \frac 1 4( d_{u,v}+ d_{v,u})^2   = \sum_{\substack{u<v \\ d_{u,v} <\epsilon d \\ d_{v,u} < \epsilon d}}\frac{(d_{u,v}+ d_{v,u})^2}{4} +\sum_{\substack{u<v \\ d_{u,v} \geq \epsilon d \vee d_{v,u} \geq \epsilon d }}\frac{(d_{u,v}+ d_{v,u})^2}{4}\\
& <\frac{\epsilon d^2}{4} + \frac 1 4 \sum_{\substack{u<v \\ d_{u,v} \geq \epsilon d \vee d_{v,u} \geq \epsilon d }}(d_{u,v}+ d_{v,u})^2. \label{eqn:variance_bound}
\end{align}
}

The challenge here is that the variance can be very large -- up to roughly $\Omega(d^2)$, and so in order to overcome it Karloff suggested using $O(\frac{1}{\epsilon^2})$ pair-wise independent projections to the binary alphabet, and then by applying the Chebyshev inequality the probability that the average of the projected distances is an acceptable approximation of $d$ is large enough.

\section{New Algorithm}\label{sec:algo}

The reason why Karloff used $O(\frac{1}{\epsilon^2})$ different projections was because the variance of a single projection is high.
As can be seen in Equation~\ref{eqn:variance_bound}, the high variance is due to the at most $\frac 1 \epsilon$ entries in the alignment matrix $D$ that are larger than $\epsilon d$. Such entries are called \emph{heavy hitters}. If we could somehow separate those heavy hitters from the rest of $D$, and, say, compute their values directly, then the remaining non-heavy hitter entries would have a low variance ($O(\epsilon d^2)$) in which case $O(\frac 1 \epsilon)$ projections would suffice.

However, it is not clear how to find these heavy hitters efficiently. So instead, we use a different strategy.
We say that a matrix $D'=\{d'_{u,v}\}_{u,v\in\Sigma}$, is a \emph{sparse \noise matrix} if:
\begin{enumerate}
\item $\sum_{\substack{u,v\in \Sigma}}(d_{u,v} - d'_{u,v})^2 \leq b\epsilon d^2 $ for constant $b = \frac{2^{13} + 2^{12}  + 1}{{2^{14}}} < 0.752$.
\item The number of non-zero entries in $D'$ is at most $\frac{3}{\epsilon}$.
\end{enumerate}
Notice that the matrix in which all entries that correspond to a heavy hitter pair $(u,v)$ have the value $d_{u,v}$ and all other entries are zero is a sparse \noise matrix, which matches the intuition described above. As we shall show, the only properties of such a matrix that we require are the ones which define a sparse \noise matrix.

Our algorithm will make use of sparse \noise matrices by constructing a (possibly different) matrix for each location in $T$. However, as explained in Section~\ref{sec:prelim} the discussion here focuses on only one location $j$, and so we make use of only one sparse \noise matrix $D'$.
We emphasize that representing $D'$ explicitly is too costly since it is too large. Instead, we use an implicit representation of $D'$ by considering only the non-zero entries.
In Section~\ref{sec:HH} we show an algorithm that with high probability will construct an implicitly represented matrix $D'$ with the desired properties in $\ot(\frac 1 \epsilon)$ time, by proving the following lemma.

\begin{lemma}\label{lem:construct_D'}
There exists an algorithm that with high probability computes a sparse \noise matrix $D'$ for location $j$ in $T$ such that the number of global and local operations performed by the algorithm is $O(\frac 1 \epsilon \log \frac 1 \epsilon \log n \log m\\ \log |\Sigma|)$.
\end{lemma}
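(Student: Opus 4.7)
The plan is to build a sparse-recovery analog of CountSketch/Count-Min that acts on the pair-indexed matrix $\{d_{u,v}\}_{u,v\in\Sigma}$ but uses only $\ot(1/\epsilon)$ global FFT convolutions per location. I would set $B=\Theta(1/\epsilon)$ and design a pair hash $H\colon\Sigma^2\to[B]$ together with a bucket-count computation such that (a)~the $B$ per-location bucket counts $c^{(j)}_c=\lvert\{i:H(T[j+i],P[i])=c\}\rvert$ can be obtained from $O(B)$ invocations of Theorem~\ref{thm:FFT} amortized across all locations, and (b)~$H$ is approximately pairwise independent on $\Sigma^2$, so that each heavy pair $(u,v)$ with $d_{u,v}\ge\epsilon d$ is isolated in its bucket from all other heavy pairs with constant probability per repetition. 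A natural candidate is the additive hash $H(u,v)=(h^{(1)}(u)+h^{(2)}(v))\bmod B$, which gives uniform $1/B$ collision probability on distinct pairs and admits a modular-convolution implementation.

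Given the bucket counts, I would recover the coordinates of any isolated heavy pair by $O(\log|\Sigma|)$ rounds of bit-by-bit group testing: for each bit of $u$ and each bit of $v$, rerun the sketch with $T$ (respectively $P$) restricted to positions whose character has that bit equal to $1$, and read off the bit from the ratio of the restricted bucket count to the full one. Using $R=\Theta(\log(1/\epsilon)\log n)$ independent hash families and medianing the Count-Min-style estimates amplifies the per-heavy-pair success probability to $1-1/\mathrm{poly}(n)$; a union bound over the at most $1/\epsilon$ heavy pairs then delivers the high-probability construction of $D'$ with support at most $3/\epsilon$. For the $L_2$ guarantee I would split
\begin{equation*}
\|D-D'\|_2^2=\sum_{(u,v)\text{ non-heavy}}d_{u,v}^2+\sum_{(u,v)\text{ heavy}}(d_{u,v}-d'_{u,v})^2.
\end{equation*}
The first sum is at most $\epsilon d\cdot\sum_{u,v}d_{u,v}=\epsilon d^2$ because every non-heavy entry has $d_{u,v}<\epsilon d$. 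For the second, an isolated heavy bucket's estimator differs from $d_{u,v}$ only by the mass of non-heavy pairs hashed to the same bucket, which is $O(d/B)=O(\epsilon d)$ in expectation and concentrates around it by Chebyshev (using $\sum_{\text{non-heavy}}d_{u',v'}^2\le\epsilon d^2$) and medianing. Squaring and summing over $\le 1/\epsilon$ heavy pairs yields $O(\epsilon d^2)$, and tuning constants gives the promised $b\epsilon d^2$ bound. The operation count is dominated by $B\cdot\log|\Sigma|\cdot R=O((1/\epsilon)\log(1/\epsilon)\log n\log|\Sigma|)$ global FFTs at $O(\log m)$ amortized per location, matching the claimed budget.

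The main obstacle is the tension between items (a) and (b): the only primitive for producing bucket counts across all locations in $\ot(n)$ time is Theorem~\ref{thm:FFT}, which convolves \emph{binary} projections of $T$ and $P$, while a generic pair hash $H\colon\Sigma^2\to[B]$ need not decompose into such binary projections. The obvious product hash $(h^{(1)}(u),h^{(2)}(v))$ with $h^{(1)},h^{(2)}\colon\Sigma\to[\sqrt B]$ is decomposable and costs exactly $B$ binary convolutions, but inflates row/column collision probabilities to $1/\sqrt B$, which would force $B=\Theta(1/\epsilon^2)$ to isolate heavies and defeat the whole savings. Designing a hash that is simultaneously FFT-decomposable and uniformly $1/B$-collision-probable on pairs, or equivalently an estimator that tolerates the product-hash row/column correlations using extra sign randomness that cancels in expectation, is exactly the novel sparse-recovery-for-pair-inputs ingredient flagged in the introduction, and is where I expect the real technical work to live.
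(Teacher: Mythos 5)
Your plan correctly identifies the overall shape of the problem (a $\Theta(1/\epsilon)$-bucket sketch of the pair matrix, bit-testing recovery, $\Theta(\log n)$ repetitions, and an $L_2$ split into non-heavy and heavy terms), but the two places where you defer the work are precisely where the paper's proof lives, so as it stands there is a genuine gap. First, the hash you need --- simultaneously computable by $O(1/\epsilon)$ invocations of Theorem~\ref{thm:FFT} on projected copies of $T$ and $P$, and having collision probability $\Theta(\epsilon)$ on arbitrary distinct pairs --- is never constructed; you yourself flag this tension as the place where the real technical work lies. The additive candidate $H(u,v)=(h^{(1)}(u)+h^{(2)}(v))\bmod B$ does give $1/B$ pair collisions, but Theorem~\ref{thm:FFT} only counts alignments of one binary projection of $T$ against one binary projection of $P$, and you do not show how to obtain the additive bucket counts from such primitives within budget (naively this needs all $B^2$ pairs of level sets). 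The paper resolves the tension in a different way: it keeps the FFT-friendly product hash $(u,v)\mapsto(\tau_i(u),\pi_i(v))$, accepts its $1/\ell_i$ and $1/r_i$ row/column collision rates, and compensates by running $O(\log\frac{1}{\epsilon})$ rectangle shapes $\ell_i\times r_i$ with $\ell_i r_i=\Theta(1/\epsilon)$, so that for every pair $(u,v)$ some shape makes the expected cross noise $O(\sqrt{\epsilon\, w(u)w(v)})$; the identity $\sum_{u,v}w(u)w(v)=d^2$ then closes the $L_2$ bound over the $O(1/\epsilon)$ recovered entries. Your proposal has no substitute for this multi-scale step, and without it the decomposable product hash forces $B=\Theta(1/\epsilon^2)$, as you yourself observe.

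Second, you never address the aligned \emph{matches}: any convolution-based bucket count unavoidably includes the diagonal pairs $(\sigma,\sigma)$, whose total mass is $m-d$ and can dwarf $d$. With your additive hash this mass is spread essentially uniformly over the buckets, contributing expected noise $\Theta(\epsilon(m-d))$ per bucket, which can exceed $\epsilon d$ by an unbounded factor and destroys both the heavy-pair estimates and the bit-by-bit group testing. The paper devotes a dedicated construction to exactly this issue: it couples the two projections via $\pi_i(\cdot)=\tau_i(\cdot)\bmod r_i$ so that all diagonal pairs land in an identifiable set of cells, discards those cells, and shows every off-diagonal pair avoids them with probability at least $1-\frac{1}{32}$; your scheme has no analogue. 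A smaller but real issue: the definition of a sparse \noise matrix fixes $b<0.752$, and your first term alone, $\sum d_{u,v}^2\le\epsilon d^2$ over non-heavy entries, already exceeds $b\epsilon d^2$; the paper avoids this by effectively recovering entries down to threshold $\epsilon d/2$ and keeping the $3/\epsilon$ largest, so ``tuning constants'' here means changing the recovery threshold, not just repetition counts.
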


Given Lemma~\ref{lem:construct_D'}, the pseudo-code of the algorithm is given in Figure~\ref{alg:new_algo}.
We will now bound the expected value and variance of $X^*$.
Fix $h_i$. Let $\alpha_{u,v} = 1$ if $h_i(u)\neq h_i(v)$ and $0$ otherwise, and let $x_i=\texttt{HAM}(h_i(T_j),h_i(P)) + \frac 1 2 \sum_{h_i(u)=h_i(v)}d'_{u,v} - \frac 1 2 \sum_{h_i(u)\neq h_i(v)}d'_{u,v}$. Notice that $E[\frac 12 - \alpha_{u,v}] = 0$. Recall that each function $h_i$ is 4-wise independent. Then the expected value and variance of $x_i$, under the random choice of $h_i$, is $E[x_i] = \frac d 2$ and
{
\small
\begin{align*}
V[x_i] &= \sum_{\substack{u<v }} V\left[\alpha_{u,v} (d_{u,v}+d_{v,u}) + (\frac 1 2 - \alpha_{u,v})(d'_{u,v}+d'_{v,u})\right]\\
& = \sum_{\substack{u<v }} E\left[\left(\alpha_{u,v}(d_{u,v}+d_{v,u}) + (\frac 1 2 - \alpha_{u,v})(d'_{u,v}+d'_{v,u}) - E\left[(\alpha_{u,v})(d_{u,v}+d_{v,u}) + (\frac 1 2 - \alpha_{u,v})(d'_{u,v}+d'_{v,u})\right]\right)^2\right]\\
& = \sum_{\substack{u<v }} E\left[\left(\alpha_{u,v}(d_{u,v}+d_{v,u}) + (\frac 1 2 - \alpha_{u,v})(d'_{u,v}+d'_{v,u}) - \frac 1 2 (d_{u,v}+d_{v,u})\right)^2\right]\\
& = \sum_{\substack{u<v }}  \left[\Pr\left[\alpha_{u,v}=1\right]\cdot \left(\frac 12 (d_{u,v}+d_{v,u})-\frac 1 2 (d'_{u,v}+d'_{v,u}) \right)^2 +\Pr\left[\alpha_{u,v}=0\right]\cdot \left( \frac 1 2 (d'_{u,v}+d'_{v,u}) - \frac 1 2 (d_{u,v}+d_{v,u})\right)^2\right] \\
& = \sum_{\substack{u<v }} \left[\frac 1 2 \left(\frac 1 2 (d_{u,v}+d_{v,u}) - \frac 1 2 (d'_{u,v}+d'_{v,u})\right)^2+ \frac 1 2 \left(\frac 1 2 (d'_{u,v}+d'_{v,u}) - \frac 1 2 (d_{u,v}+d_{v,u})\right)^2\right]  \\
& = \sum_{\substack{u<v }} \frac 1 4\left((d_{u,v}- d'_{u,v})+(d_{v,u}-d'_{v,u})\right)^2  \leq \sum_{\substack{u<v }} \frac 1 4(2(d_{u,v}- d'_{u,v})^2+2(d_{v,u}-d'_{v,u})^2)  \\
& = \sum_{\substack{u,v }} \frac 1 2(d_{u,v}- d'_{u,v})^2   < \frac{0.752}{2}\epsilon d^2 = 0.376\epsilon d^2.
\end{align*}
}

Thus, given (an implicit) $D'$, we obtain an estimation of $d$ that has low variance.
Recall that the hash functions are pair-wise independent among themselves. Therefore, $E[X^*] = d$ and $V[X^*] =V[\sum_{i=1}^{k} x_i /(k/2)] = \frac{4}{k^2} \sum_{i=1}^{k} V[x_i] =\frac{4V[x_i]}{k} = \frac{4b\epsilon d^2}{8b/\epsilon} = \frac{\epsilon^2 d^2}{2}$.
Therefore, by Chebyshev's inequality, $\Pr[|X^*-E[X^*]| > \epsilon d ] < \frac 1 {2}$.
While $D'$ can be used to decrease the variance of the estimation, there is still a bottleneck in the runtime of the algorithm from computing $\frac 1 2 \sum_{h_i(u)=h_i(v)}d'_{u,v} - \frac 1 2 \sum_{h_i(u)\neq h_i(v)}d'_{u,v}$ for all of the $O(\frac 1 \epsilon)$ projections. This can be done directly in $O(\frac {1}{\epsilon^2})$ time, which is too costly for our goals (since this computation would need to be repeated $O(n)$ times). We overcome this challenge by introducing a special construction of $O(\frac 1 \epsilon)$ projections to a binary alphabet, where each projection is 4-wise-independent, and the projection functions are pair-wise-independent among themselves. The special construction will have the property that computing $$\sum_{i=1}^{k}\left(\frac 1 2 \sum_{h_i(u)=h_i(v)}d'_{u,v} - \frac 1 2 \sum_{h_i(u)\neq h_i(v)}d'_{u,v}\right)$$ can be done in $O(\frac 1 \epsilon)$ time. This construction is summarized in the following lemma, which is proven in Section~\ref{sec:projections}.

\begin{lemma}
There exists an algorithm for constructing $h_1,\ldots,h_k$ in line 3 of Approx-Hamming-Distance, so that executing line 4 of Approx-Hamming-Distance takes  $O(\frac 1\epsilon \log k)$ time.
\end{lemma}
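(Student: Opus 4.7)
The plan is to split the per-location work of line~4 of \textsc{Approx-Hamming-Distance} into a global and a local piece. The $k=O(1/\epsilon)$ global operations $\texttt{HAM}(h_i(T_j),h_i(P))$ are computed once for all locations via $k$ applications of Theorem~\ref{thm:FFT}, amortizing to $O(\log m)$ per location per hash function. The local piece is
\[
\sum_{i=1}^{k}\left(\tfrac{1}{2}\sum_{\substack{u,v\\ h_i(u)=h_i(v)}} d'_{u,v} \;-\; \tfrac{1}{2}\sum_{\substack{u,v\\ h_i(u)\neq h_i(v)}} d'_{u,v}\right) \;=\; \sum_{\substack{u,v\\ d'_{u,v}\neq 0}} d'_{u,v}\cdot\bigl(c(u,v) - k/2\bigr),
\]
where $c(u,v):=\card{\{i\in[k]: h_i(u)=h_i(v)\}}$, obtained by swapping the order of summation. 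Since $D'$ has only $O(1/\epsilon)$ non-zero entries, it suffices to construct $h_1,\ldots,h_k$ so that a single $c(u,v)$ for any pair $u\neq v$ can be computed in $O(\log k)$ time, while still preserving 4-wise independence of each $h_i$ and the ``pairwise-among-themselves'' property that drives the variance bound in Section~\ref{sec:algo}.

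\textbf{Construction of the hash family.} Let $K=\lceil\log_2(k+1)\rceil$ and pick $K+1$ independent, 4-wise independent hash functions $f_0,f_1,\ldots,f_K:\Sigma\to\{0,1\}$ in the standard way, via random degree-$3$ polynomials over $\F_q$ with $q\geq|\Sigma|$. Associate each index $i\in[k]$ with a distinct vector $w^{(i)}\in\{0,1\}^K$, and define
\[
h_i(x) \;=\; f_0(x)\;\oplus\;\bigoplus_{l=1}^{K} w^{(i)}_l\cdot f_l(x).
\]
Each $h_i$ is 4-wise independent because the XOR of any non-empty independent collection of 4-wise independent functions is itself 4-wise independent. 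The pairwise independence that actually matters reduces, after expanding $x_i$ as a linear combination of indicators $Y_i(a,b):=[h_i(a)\neq h_i(b)]$, to the following claim: for any $i\neq j$ and any pairs $(a,b),(a',b')$ with $a\neq b$ and $a'\neq b'$, the tuple $(h_i(a),h_i(b),h_j(a'),h_j(b'))$ is uniform on $\{0,1\}^4$. Writing the four outputs as $\F_2$-linear forms in the independent bits $\{f_l(p): l,p\in\{a,b,a',b'\}\}$ turns this into a rank-$4$ claim about a $4\times O(K)$ matrix, which in turn follows from the linear independence of $(1,w^{(i)}_1,\ldots,w^{(i)}_K)$ and $(1,w^{(j)}_1,\ldots,w^{(j)}_K)$ over $\F_2$ (valid because $w^{(i)}\neq w^{(j)}$).

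\textbf{Fast counting and main obstacle.} Given a query pair $(u,v)$ with $u\neq v$, compute $b_l:=f_l(u)\oplus f_l(v)$ for $l=0,1,\ldots,K$ in $O(K)$ time. Then $h_i(u)=h_i(v)$ iff $\langle w^{(i)},(b_1,\ldots,b_K)\rangle = b_0$, a single $\F_2$-linear equation in the unknown $w^{(i)}$. The number of $w\in\{0,1\}^K$ satisfying it is $2^K\cdot[b_0=0]$ if $(b_1,\ldots,b_K)=0$, and $2^{K-1}$ otherwise; subtracting the contribution of any vector excluded from the index set (for instance, the zero vector) yields $c(u,v)$ in $O(K)=O(\log k)$ total time. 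Summed over the $O(1/\epsilon)$ non-zero entries of $D'$ this gives the claimed $O(\frac{1}{\epsilon}\log k)$ bound. The main obstacle will be the rank-$4$ verification behind pairwise independence: it requires a short case analysis on the pattern of overlaps within $\{a,b,a',b'\}$---all four distinct, single identifications like $a=a'$, and double identifications like $\{a,b\}=\{a',b'\}$ or $a=b',\,b=a'$---and in each case one must show that distinctness within each pair, together with $w^{(i)}\neq w^{(j)}$, forces the four linear forms to be $\F_2$-independent.
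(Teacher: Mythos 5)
Your proposal is correct in substance but takes a genuinely different route from the paper for the two key steps. The reduction is the same: both swap the order of summation so that line~4 reduces to computing, for each of the $O(1/\epsilon)$ non-zero entries of $D'$, the count $c(u,v)=\card{\{i: h_i(u)=h_i(v)\}}$ in $O(\log k)$ time. The paper's construction takes $2\log k$ independent 4-wise independent base functions, pairs them up, lets each $h_i$ be the XOR of one function chosen from each pair, and computes $\beta_{u,v}=c(u,v)$ by a small dynamic program over a balanced binary tree on the $\log k$ pairs (maintaining, per node, the number of local combinations that agree/disagree on $(u,v)$ and merging via $e_w=e_\ell e_r+d_\ell d_r$, $d_w=e_\ell d_r+d_\ell e_r$). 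Your construction instead indexes the functions by vectors $w^{(i)}\in\{0,1\}^K$ with $h_i=f_0\oplus\bigoplus_l w^{(i)}_l f_l$, and counts by observing that $h_i(u)=h_i(v)$ is a single $\F_2$-linear equation $\langle w^{(i)},b\rangle=b_0$ in $w^{(i)}$, whose solution count over all of $\{0,1\}^K$ has a closed form; this replaces the tree DP with an $O(1)$ formula after the $O(\log k)$ computation of $b$, and your identification of exactly which cross-independence property the variance bound needs (joint uniformity of $(h_i(a),h_i(b),h_j(a'),h_j(b'))$, reduced to a rank argument using that the distinct nonzero vectors $(1,w^{(i)})$, $(1,w^{(j)})$ are automatically independent over $\F_2$) is if anything more careful than the paper's one-line argument. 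One wrinkle to clean up: with $K=\lceil\log_2(k+1)\rceil$ and an arbitrary choice of $k$ distinct vectors, the excluded set has $2^K-k$ vectors, which can be of order $k$, so ``subtract the excluded vectors'' is not an $O(1)$ step as written; the immediate fix, consistent with the paper's own WLOG that $k$ is a power of two, is to take $K=\log_2 k$ and use \emph{all} of $\{0,1\}^K$ (including $w=0$, which is harmless since $(1,0,\dots,0)\neq 0$), so no subtraction is needed, or else to choose the excluded vectors to form a subcube with a closed-form intersection count. The deferred rank-4 case analysis is routine (in every overlap pattern the four forms split into blocks on disjoint variables, and within a shared block independence follows from $(1,w^{(i)})\neq(1,w^{(j)})$), so it does not constitute a gap.
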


Thus, the total runtime of the algorithm is $\ot(\frac{1}{\epsilon})$ time per location, for a total of $\ot(\frac{n}{\epsilon})$ time for all locations.

\section{Computing $D'$}\label{sec:HH}
\paragraph{Intuition.} In order to construct a sparse \noise matrix $D'$ we would intuitively like to estimate the at most $\frac{1} \epsilon$ heavy hitters of $D$ which are the entries of $D$ that are at least $\epsilon d$. One idea for obtaining this estimation is to project the alphabet $\Sigma$ to a smaller alphabet, with the hopes that heavy hitters before the projection can be established from the heavy hitters after the projection. However, this specific task seems out of grasp, since the projections introduce too much noise. To overcome this challenge, we use several specially constructed projections so that together with the proper algorithm we are able to estimate $O(\frac{1}{\epsilon})$ entries of $D$, which will suffice in order to bound $\sum_{\substack{u,v\in \Sigma}} (d_{u,v} - d'_{u,v})^2$. We emphasize that while our algorithm may in fact not estimate all of the heavy hitters, this bound is still obtained with high probability.

\begin{figure}
\begin{codebox}
\Procname{$\proc{Construct-}D'(T_j,P,\epsilon)$}
\li implicitly initialize all $d'_{u,v} \leftarrow \infty$
\li \For $i=0 $ to $\log \frac {1}{\epsilon}$
\li \Do $\ell_i \leftarrow 32\cdot 2^i$
\li     $r_i \leftarrow \frac{32}{2^i\epsilon}$
\li     \Repeat $\Theta(\log n)$ times
\li         \If $\ell_i \geq r_i$
\li         \Then   pick a random 4-wise independent projection $\tau_i:\Sigma \rightarrow [\ell_i]$
\li                 $\pi_i(\cdot) = \tau_i(\cdot ) \mod r_i$

\li         \Else   pick a random 4-wise independent projection $\pi_i:\Sigma \rightarrow [r_i]$
\li                 $\tau_i(\cdot) = \pi_i(\cdot ) \mod \ell_i$
            \End
\li         \For every pair $(x,y)\in [\ell_i] \times [r_i]$ where no $(\sigma,\sigma)$ is in the preimage of $(x,y)$
\li         \Do $c_{x,y} \leftarrow$ number of times that $x$ in $\tau_i(T_j)$ is aligned with $y$ in $\pi_i(P)$
\li             \If there exists $(u,v)$ in preimage of $(x,y)$ where $d_{u,v}> \frac{c_{x,y}}{2}$
\li             \Then $d'_{u,v} \leftarrow \min(d'_{u,v},c_{x,y})$
                \End
            \End
        \End
    \End
\li implicitly set all $d'_{u,v} = \infty$ to $d'_{u,v} \leftarrow 0$
\li \For every $d'_{u,v}> 0$ such that $d'_{u,v}$ is not one of the $\frac 3 \epsilon$ largest values in $D'$
\li \Do $d'_{u,v}\leftarrow 0$
    \End
\end{codebox}
\caption{Constructing D'}\label{alg:construct_D'}
\end{figure}

\paragraph{The projections.} We consider $O(\log \frac 1 \epsilon)$ 4-wise independent projections as follows\footnote{Notice that these projections are the same for all locations in $T$ and not just for location $j$. This enables the use of Theorem~\ref{thm:FFT} for global operations in our setting.}. The $i$th projection, for $i=0,\ldots, \log \frac 1 \epsilon$, is defined by two projection functions $\tau_i: \Sigma \rightarrow [\ell_i]$ and $\pi_i: \Sigma \rightarrow [r_i]$, where $\ell_i = 32\cdot 2^i$ and $r_i = \frac{32}{2^i\epsilon}$. Notice that for all $i$ we have $\ell_i\cdot r_i = \frac{32^2}{\epsilon}$. We assume without loss of generality that $\frac{32^2}{\epsilon}$ is a power of 2.
For each $i$, the text $T_j$ is projected with $\tau_i$ while the pattern $P$ is projected with $\pi_i$.
We then compute for each pair $(x,y)\in [\ell_i]\times [r_i]$ the number of times that $x$ in the projected text is aligned with $y$ in the projected pattern using a single global convolution. Since the number of such pairs is $O(\frac 1 \epsilon)$ this can be computed in $O(\frac{\log m}{\epsilon})$ time for each $i$ via global operations, for a total of $O(\frac{\log m}{\epsilon}\log \frac 1 \epsilon)$ for all $i$.

Intuitively, we would like to claim that for a pair of characters $u\neq v$ the number of times that $\tau_i(u)$ in the projected text aligns with $\pi_i(v)$ in the projected pattern is a close enough estimation of $d_{u,v}$.
However, if we allow $\tau_i$ and $\pi_i$ to be any random projection functions then we will be introducing new mismatches via the projection, since it is likely that for a projected pair $(x,y)$ where $x\neq y$ there exists some character $\sigma\in\Sigma$ such that $(x,y)= (\pi_i(\sigma) ,\tau_i(\sigma))$, and the task of distinguishing projections of matching pairs from projections of mismatching pairs seems to be too difficult within the allowed time.
One method for overcoming this problem would be to only consider the number of mismatches for projected pairs as long as they are not of the form $(\pi_i(\sigma), \tau_i(\sigma))$ for any $\sigma\in\Sigma$. Then we would hope that repeating the entire process with enough choices of $\pi_i$ and $\tau_i$ will guarantee that all of the appropriate pairs of different characters are projected enough times to a pair that is not of this form. However, for each $\sigma\in \Sigma$,  $(\pi_i(\sigma),\tau_i(\sigma)$ is a uniformly random point in $[l_i]\times [r_i]$, which is a universe of size $O(\frac 1 \epsilon)$. Given $O(\frac 1 \epsilon \log { \frac 1 \epsilon})$ random points in this universe with high probability each point appears at least once (this is the coupon collector problem). If $|\Sigma|=\Omega(\frac 1 \epsilon \log \frac 1 \epsilon)$ then with high probability we will fail to recover any pair in the preimage.

Instead, we do the following. Assume that $\ell_i \geq r_i$. The other case of $\ell_i <r_i$ is dealt with by reversing the roles of $\pi_i$ and $\tau_i$. We first pick a random $\tau_i$, and then
set $\pi_i(\sigma)=\tau_i(\sigma) \mod r_i$. Recall that both $\tau_i$ and $r_i$ are powers of $2$. The following lemma bounds the probability that a pair the projection of a given pair of different characters $(u,v)$ is discarded.
\begin{lemma}
If $r_i \leq \ell _i$, then for a given pair of different characters $(u,v)$ the probability that this pair is projected to a pair of the form $(\tau_i(\sigma), \pi_i(\sigma))$ for some $\sigma\in\Sigma$ is at most $ \frac 1 {32}$.
\end{lemma}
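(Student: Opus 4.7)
The plan is to reduce the event in question to a single collision event $\pi_i(u) = \pi_i(v)$, and then bound its probability using the pairwise independence of $\tau_i$ together with the fact that $r_i$ divides $\ell_i$.

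First, I would prove the following equivalence: the projected pair $(\tau_i(u), \pi_i(v))$ equals $(\tau_i(\sigma), \pi_i(\sigma))$ for some $\sigma \in \Sigma$ if and only if $\pi_i(u) = \pi_i(v)$. The ``if'' direction is immediate by taking $\sigma = u$, since then $(\tau_i(u), \pi_i(u)) = (\tau_i(u), \pi_i(v))$. For the ``only if'' direction, suppose some $\sigma$ satisfies $\tau_i(\sigma) = \tau_i(u)$ and $\pi_i(\sigma) = \pi_i(v)$. Since $\pi_i(\cdot) = \tau_i(\cdot) \bmod r_i$ by construction, we get $\pi_i(v) = \pi_i(\sigma) = \tau_i(\sigma) \bmod r_i = \tau_i(u) \bmod r_i = \pi_i(u)$, as claimed.

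Next, I would bound $\Pr[\pi_i(u) = \pi_i(v)]$. Since $\tau_i$ is $4$-wise (and in particular pairwise) independent and uniform on $[\ell_i]$, and since $r_i$ divides $\ell_i$ (both are powers of two with $r_i \leq \ell_i$), the reductions $\tau_i(u) \bmod r_i$ and $\tau_i(v) \bmod r_i$ are independent and uniform on $[r_i]$. Hence $\Pr[\pi_i(u) = \pi_i(v)] = 1/r_i$. Finally, since $r_i = 32/(2^i \epsilon)$ and $2^i \leq 1/\epsilon$ throughout the range $i \in \{0, 1, \ldots, \log(1/\epsilon)\}$, we have $r_i \geq 32$, which yields the claimed bound of $1/32$.

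The only subtlety is in the first step: a naive attempt to union bound over all witnesses $\sigma \in \Sigma$ would blow up by a factor of $|\Sigma|$. The key point that avoids this blow up is that the deterministic coupling $\pi_i(\cdot) = \tau_i(\cdot) \bmod r_i$ forces any hypothetical witness $\sigma$ to satisfy $\pi_i(\sigma) = \pi_i(u)$, so the existence of any witness already implies that $\sigma = u$ is a witness, and the whole event collapses to a single pairwise collision whose probability is easy to compute.
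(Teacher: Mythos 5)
Your proposal is correct and follows essentially the same route as the paper: both arguments use the coupling $\pi_i(\cdot)=\tau_i(\cdot)\bmod r_i$ to collapse the existential event over $\sigma\in\Sigma$ to the single collision event $\pi_i(u)=\pi_i(v)$, whose probability is $1/r_i\leq 1/32$ by pairwise independence. Your extra observation that the collapse is in fact an equivalence (via the witness $\sigma=u$) is a harmless refinement of the paper's one-directional containment.
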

\begin{proof}

{
\begin{align*}
\Pr[\exists_{\sigma \in \Sigma} : \tau_i(u)=\tau_i(\sigma) \wedge \pi_i(v)=\pi_i(\sigma)]  & \leq \Pr[\exists_{\sigma \in \Sigma} : \pi_i(u)=\pi_i(\sigma) \wedge \pi_i(v)=\pi_i(\sigma)]\\
& \leq \Pr[\pi_i(u)=\pi_i(v)] = \frac 1 {r_i} \leq \frac 1 {32}
\end{align*}
}

\end{proof}
Therefore, repeating the entire process with $O(\log n)$ choices of $\pi_i$ and $\tau_i$ will guarantee that with high probability every pair of different characters is projected $\Omega(\log n)$ times to a pair that is not of this form (we will need this $\Omega(\log n)$ repetition later).

\paragraph{The Algorithm.}
The pseudo-code for the algorithm is given in Figure~\ref{alg:construct_D'}. In lines 6-10 we create the projection functions, and then in lines 11-12 we compute $c_{x,y}$ for each projected pair $(x,y)$ that is not of the form
$(\pi_i(\sigma), \tau_i(\sigma))$ for any $\sigma\in\Sigma$,
where $c_{x,y}$ is the exact number of alignments of this projected pair.
We then use a bit-tester scheme using error-correcting codes (see~\cite{GLPS10}) in line 13 in order to establish if there is a pair $(u,v)$ in the preimage of $(x,y)$ that contributed more than half of $c_{x,y}$. If so, then in line 14 we use $c_{x,y}$ to estimate $d_{u,v}$, unless a smaller estimator was encountered before. Finally, the algorithm filters away all but the largest $\frac 3 \epsilon$ entries in $D'$, thereby guaranteeing that $D'$ is sparse enough.

\paragraph{Projected-noise.}
Consider a pair $(u,v)$. For a given $\pi_i$ and $\tau_i$ let $$\sum_{\substack{(u',v') \neq (u,v): \tau_i(u)=\tau_i(u') \wedge \pi_i(v)=\pi_i(v')}}d_{u',v'}$$ be the \textit{projected-noise} for $(u,v)$.
We would like to bound the amount of projected-noise for a pair $(u,v)$, since if it is less than $d_{u,v}$ then the \textit{bit-tester} in line 13 will identify the pair $(u,v)$ from the projected noise.
In the following analysis we focus on a specific choice of $i$, $\tau_i$, $\pi_i$, and a pair $(u,v)$. The analysis is partitioned into two cases. In the first case we consider pairs $(u',v')$ of the form either $(u,v')$ or $(u',v)$.
In the second case we consider the remaining possible forms.

\paragraph{The first case.}
We focus on pairs of the form either $(u,v')$ or $(u',v)$. Let $w(u) = \sum_{v'\in \Sigma} d_{u,v'}$, and let $w(v) = \sum_{u'\in \Sigma} d_{u',v}$.
Notice that for any pair $(u,v')$ the probability that $\pi_i(v)=\pi_i(v')$ is $\frac 1 {r_i}$. Similarly, for any pair $(u',v)$ the probability that $\tau_i(u)=\tau_i(u')$ is $\frac 1 {\ell_i}$.
Therefore,
$E[\sum_{\substack{v' \neq v\\ \pi_i(v')=\pi_i(v)}}d_{u,v'}] = \frac{w(u)}{r_i}$ and $E[\sum_{\substack{u' \neq u\\ \pi_i(u')=\pi_i(u)}}d_{u',v}] = \frac{w(v)}{\ell_i}$. Thus the expected amount of projected noise from pairs of the form $(u,v')$ or $(u',v)$ is $\frac{w(u)}{r_i} + \frac {w(v)}{\ell_i}$.
Recall that $\ell_i\cdot r_i = \frac{32^2}{\epsilon}$. This expected projected noise is minimized when $\ell_i=\sqrt{\frac{\epsilon w(u)}{32^2 w(v)}}$, and will then be $\frac{w(u)}{r_i} + \frac {w(v)}{\ell_i} = \frac{\sqrt{\epsilon w(u) w(v)}}{16}$. However, our algorithm is restricted to values of $\ell_i$ and $r_i$ that are powers of 2. Within these limited options for $\ell_i$ and $r_i$ the new minimized expected amount of projected noise is at most twice the minimum over all options for $\ell_i$ and $r_i$, and hence the expected amount of noise due to this type of pair is at most $\frac{\sqrt{\epsilon w(u) w(v)}}{8}$.

\paragraph{The second case.}
We now focus on the remaining cases. Recall that the projections are 4-wise independent. Furthermore, recall that our algorithm ignores projected pairs of the form $(\tau_i(\sigma), \pi_i(\sigma))$ for any $\sigma\in\Sigma$. Therefore, we can ignore the case $u=v'$ since otherwise if $\tau_i(u)= \tau_i(u')$ and $\pi_i(v) = \pi_i(v')$ then $(\tau_i(u),\pi_i(v))= (\tau_i(v'),\pi_i(v'))$ and the algorithm skips this projected pair. Similarly, we can ignore the case $u'=v$. Therefore, the only case that remains is that all four of $u,u',v,v'$ are distinct. Since the projections are 4-wise independent, it must be that
{
\small
\begin{align*}
\Pr[\tau_i(u)= \tau_i(u') \wedge \pi_i(v)= \pi_i(v')]  =  \frac{1}{\ell_i} \cdot \frac{1}{r_i} = \frac{\epsilon}{2^{10}}.
\end{align*}
Thus the expected amount of such noise due to this type of pair is at most $\frac{\epsilon d}{2^{10}}$.
}
\paragraph{Bounding the noise.}
Now, via Markov's inequality, with probability at least $\frac 1 2$ the amount of noise on $(u,v)$ is at most $\frac{\sqrt{\epsilon w(u) w(v)}}{4}+ \frac{\epsilon d}{2^9}$. Since each pair $(u,v)$ is not projected into a pair of the form $\pi_i(\sigma), \tau_i(\sigma))$ for any $\sigma\in \Sigma$ at least $\Omega(\log n)$ times, and we pick the smallest estimation out of all choices of projections, then with high probability the total amount of noise on $(u,v)$ is at most $\frac{\sqrt{\epsilon w(u) w(v)}}{4}+ \frac{\epsilon d}{2^9}$.

The amount of noise on $(u,v)$ can have two types of estimations for values of $d_{u,v}$. The first type of estimation comes from the case in which the noise on $(u,v)$ is smaller than $d_{u,v}$. In this case the bit-tester will find $(u,v)$ and so $d'_{u,v}\leq d_{u,v} + \frac{\sqrt{\epsilon w(u) w(v)}}{4}+ \frac{\epsilon d}{2^9}$. In other words, this estimation overestimates the contribution of $d_{u,v}$ by at most the amount of noise on $(u,v)$.
The second type of estimation comes from the case in which the noise on $(u,v)$ is at least $d_{u,v}$ and so $d'_{u,v}=0$. This case will only concern us if $(u,v)$ is a heavy hitter pair (in which case we may give this pair an estimation of $0$). This case implies that the bit-tester failed to find $(u,v)$, and so $\frac{\sqrt{\epsilon w(u) w(v)}}{4}+ \frac{\epsilon d}{2^9} \geq d_{u,v}$, or  $d'_{u,v} = 0 \geq d_{u,v}-\frac{\sqrt{\epsilon w(u) w(v)}}{4}+ \frac{\epsilon d}{2^9}  $.

\paragraph{Filtering.}
The last part of the algorithm for constructing $D'$ filters all but the largest $\frac 3 \epsilon$ estimators. We will now show that this filtering does not affect the estimation of any heavy hitter pair. If the estimator in $D'$ for a heavy hitter pair was 0 prior to the filtering, then the estimator after the filtering is clearly unchanged. What needs to be proven is that if the estimator in $D'$ for a heavy hitter was larger than 0 prior to the filtering, then with high probability that estimator will be one of the $\frac 3 \epsilon$ largest entries, and so it remains unchanged due to the filtering. For the rest of the discussion here we assume that the values in $D'$ are the values prior to the filtering.

By a simple counting argument, there are at most $\frac 2 \epsilon$ entries in $D$ with a value that is at least $\frac \epsilon 2 d$. The following lemma will help us complete the proof.

\begin{lemma}
With high probability, there are at most $\frac 1 \epsilon$ entries in $D$ with value less than $\frac \epsilon 2$ that are estimated in $D'$ with value at least $\epsilon d$.
\end{lemma}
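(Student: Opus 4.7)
The plan is to use the analysis from the preceding paragraphs to show that any bogus entry---an $(u,v)$ with $d_{u,v} < \frac{\epsilon d}{2}$ yet $d'_{u,v} \geq \epsilon d$---must have a large marginal product $w(u)w(v)$, and then to bound the number of such pairs using the identity $\sum_u w(u) = \sum_v w(v) = d$.

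The first step leverages the dichotomy established just above the lemma: with high probability, $d'_{u,v}$ is either zero (when the projected noise on $(u,v)$ exceeds $d_{u,v}$, so the bit-tester never identifies $(u,v)$) or it satisfies $d'_{u,v} \leq d_{u,v} + N(u,v)$, where $N(u,v) := \frac{\sqrt{\epsilon w(u)w(v)}}{4} + \frac{\epsilon d}{2^9}$. A union bound over the at most $|\Sigma|^2 \leq n^2$ pairs---exploiting the $\Omega(\log n)$-fold amplification at each dyadic scale---ensures this dichotomy holds for every pair simultaneously w.h.p. For any bogus pair the zero branch is ruled out by $d'_{u,v} \geq \epsilon d > 0$, so $\epsilon d \leq d_{u,v} + N(u,v) < \tfrac{\epsilon d}{2} + N(u,v)$, forcing $N(u,v) > \tfrac{\epsilon d}{2}$. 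Absorbing the additive $\frac{\epsilon d}{2^9}$ term and squaring then yields $w(u)w(v) \geq C\,\epsilon d^2$ with $C = (255/128)^2 > 3.9$.

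The second step is a short counting argument. Since $\sum_u w(u) = \sum_v w(v) = d$, we have $\sum_{u,v} w(u)w(v) = d^2$, so the number of pairs with $w(u)w(v) \geq C\,\epsilon d^2$ is at most $\frac{d^2}{C \epsilon d^2} = \frac{1}{C\epsilon} < \frac{1}{\epsilon}$, completing the proof.

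The main obstacle is the first step: the noise bound proved above applies only per pair with constant probability and must be amplified to hold for every pair simultaneously. The $\Omega(\log n)$ repetitions at each dyadic scale are precisely what push the per-pair failure probability below $|\Sigma|^{-2} n^{-c}$ (using $|\Sigma| \leq n$), which is what allows the union bound to go through; otherwise a small number of pairs with pathological projections could masquerade as heavy hitters and wreck the top-$\tfrac{3}{\epsilon}$ filtering.
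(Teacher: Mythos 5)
Your proof is correct and follows essentially the same route as the paper: a falsely large estimate for a pair with $d_{u,v}<\frac{\epsilon d}{2}$ forces the $\frac{\sqrt{\epsilon w(u)w(v)}}{4}$ noise term to be $\Omega(\epsilon d)$, and then the identity $\sum_{u,v}w(u)w(v)=d^2$ bounds the number of such pairs by $O(\frac1\epsilon)$. Your constants are slightly tighter (threshold $\frac{255}{512}\epsilon d$ instead of the paper's $\frac{\epsilon d}{4}$) and you make the $\Omega(\log n)$-amplification/union-bound step explicit, but these are cosmetic differences.
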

\begin{proof}
Recall that there are two cases that contribute noise. With high probability the first case contributes $\frac{\sqrt{\epsilon w(u) w(v)}}{4}$ and the second case contributes $\frac {\epsilon d} {2^9}$. Let $S$ be the set of pairs for which the noise from the first case is more than $\frac{ \epsilon d} 4$, which is required (but not sufficient) in order for their estimation to be at least $\epsilon d$.
Then
{
\begin{align*}
|S|\left(\frac{ \epsilon d} 4\right)^2 & \leq \sum_{(u,v)\in S} \left(\frac{\sqrt{\epsilon w(u) w(v)}}{4} \right)^2
 = \sum_{(u,v) \in S} \frac {\epsilon w(u) w(v)}{{16}}
 = \frac{\epsilon} {{16}}   \sum_{u,v\in \Sigma} {w(u) w(v)}
 = \frac{\epsilon d^2} {{16}}.
\end{align*}
}
Therefore, $|S| \leq \frac 1 \epsilon$.
\end{proof}

\begin{cor}
Prior to filtering there are at most $\frac 3 \epsilon$ entries in $D'$ with value at least $\epsilon d$ is at most.
\end{cor}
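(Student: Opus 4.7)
The plan is to partition the entries of $D'$ that exceed $\epsilon d$ into two groups based on the value of the corresponding entry of $D$, and to bound each group separately using tools already assembled in this section.

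First, I would consider entries $(u,v)$ for which $d_{u,v} \geq \frac{\epsilon d}{2}$. Since $\sum_{u,v} d_{u,v} = d$, a trivial counting argument (already invoked in the paragraph preceding the corollary) shows that at most $\frac{2}{\epsilon}$ such pairs exist. This group therefore contributes at most $\frac{2}{\epsilon}$ entries to the count, regardless of what estimates they receive in $D'$.

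Second, I would turn to entries $(u,v)$ for which $d_{u,v} < \frac{\epsilon d}{2}$ yet $d'_{u,v} \geq \epsilon d$. This is exactly the scenario controlled by the preceding lemma, which (with high probability) bounds the number of such pairs by $\frac{1}{\epsilon}$. Combining the two estimates yields $\frac{2}{\epsilon} + \frac{1}{\epsilon} = \frac{3}{\epsilon}$, which is the claimed bound.

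Conceptually there is no real obstacle here: the corollary is a straightforward aggregation of the counting bound for genuinely heavy entries of $D$ and the lemma bounding spuriously large estimates caused by noise. The only thing to keep in mind is that the case analysis is exhaustive -- every pair $(u,v)$ falls into exactly one of the two groups -- and that the high-probability guarantee propagates from the lemma to the corollary without further union bounds, since only one additional event is being invoked.
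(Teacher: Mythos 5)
Your proof is correct and matches the paper's intended argument exactly: the corollary is obtained by combining the counting bound of at most $\frac{2}{\epsilon}$ pairs with $d_{u,v} \geq \frac{\epsilon}{2}d$ with the preceding lemma's bound of at most $\frac{1}{\epsilon}$ pairs whose true value is below $\frac{\epsilon}{2}d$ but whose estimate reaches $\epsilon d$. The paper leaves this aggregation implicit, and your two-case partition is precisely what it relies on.
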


Finally, notice that a heavy hitter pair $(u,v)$ such that $d'_{u,v} > 0$ must have $d'_{u,v} \geq \epsilon d$, and so this entry cannot be filtered away.

\paragraph{Bounding the variance.}
Notice that there are at most $\frac 3 \epsilon$ non-zero entries in $D'$ and at most $\frac 1 \epsilon$ heavy hitters, and recall that some heavy hitter pairs may be estimated in $D'$ with 0. Finally, with high probability we have

{
\small
\begin{align*}
\sum_{\substack{u,v\in \Sigma}} (d_{u,v}-d'_{u,v})^2 & = \sum_{\substack{u,v\in \Sigma \\ d_{u,v} < \epsilon d \\ d'_{u,v}=0}} (d_{u,v})^2 + \sum_{\substack{u,v\in \Sigma \\ d_{u,v}\geq \epsilon d\vee d'_{u,v}\neq 0} } (d_{u,v}-d'_{u,v})^2\\
&\leq  \frac{\epsilon d^2}{2} + \sum_{\substack{u,v\in \Sigma \\ d_{u,v}\geq \epsilon d\vee d'_{u,v}\neq 0}}\left(\frac{\sqrt{\epsilon \cdot w(u) \cdot w(v)}}{4}+ \frac{\epsilon d }{{2^9}}\right)^2\\
&\leq \frac{\epsilon d^2}{2} + \sum_{\substack{u,v\in \Sigma \\ d_{u,v}\geq \epsilon d\vee d'_{u,v}\neq 0}}\left(\frac{\sqrt{\epsilon \cdot w(u) \cdot w(v)}}{2}\right)^2+ \sum_{\substack{u,v\in \Sigma \\ d_{u,v}\geq \epsilon d\vee d'_{u,v}\neq 0}}\left(\frac{\epsilon d}{{2^8}}\right)^2\\
& \leq \frac{\epsilon d^2}{2} + \left(\sum_{\substack{u,v\in \Sigma \\ d_{u,v}\geq \epsilon d\vee d'_{u,v}\neq 0}}\frac{\epsilon \cdot w(u) \cdot w(v)}{{4}}\right)+ \frac{4\epsilon d^2}{{2^{16}}}\\
& \leq \frac{\epsilon d^2}{2} + \frac{\epsilon d^2}{{4}}+ \frac{\epsilon d^2}{{2^{14}}}
 \leq \frac{2^{13} + 2^{12}  + 1}{{2^{14}}}\epsilon d^2
 = b\epsilon d^2
\end{align*}
}

\section{Constructing the Projections}\label{sec:projections}
We now turn our focus towards constructing the projections in line 4 from Figure~\ref{alg:new_algo}, so that the computation in line 5 will take $\ot(\frac{1}{\epsilon})$ time. In particular, we will show that computing $$\sum_{i=1}^{k}\left(\frac 1 2 \sum_{\substack{u,v \\ h_i(u)=h_i(v)}}d'_{u,v} - \frac 1 2 \sum_{\substack{u,v \\ h_i(u)\neq h_i(v)}}d'_{u,v}\right)$$ can be done in $\ot(\frac{1}{\epsilon}\log k)$ time.

To start off, notice that it suffices to know for each $d'_{u,v} > 0$ the number $\beta_{u,v} = |\{1\leq i \leq k : h_i(u) = h_i(v) \}|$, since
{
\small
\begin{align*}
\sum_{i=1}^{k}\left(\sum_{\substack{u,v \\ h_i(u)=h_i(v)}}d'_{u,v} - \sum_{\substack{u,v \\ h_i(u)\neq h_i(v)}}d'_{u,v}\right) =\sum_{d'_{u,v} > 0} (\beta_{u,v} - (k-\beta_{u,v})) d'_{u,v}.
\end{align*}
}
Once we show how to construct the hash functions $h_1,\ldots, h_k$ so that we can compute $\beta_{u,v}$ for $d'_{u,v} \neq 0 $ in $O(\log k)$ time, then we are done.

\paragraph{The construction.} Assume without loss of generality that $k$ is a power of 2. Consider $2\log k$  base hash functions $f_1,\ldots f_{2\log k}: \Sigma \rightarrow [0,1]$ where each base hash function is picked independently from a 4-wise independent family of hash functions. We partition the base hash functions to $\log k$ pairs, and consider all possible combinations of picking 1 function from each pair. Each combination of $\log k$ functions defines a different projection hash function $h_i$ by considering the xor of the outputs of the $\log k$ functions. The number of projection hash functions is exactly the number of combinations of base hash functions in our construction, which is $(2)^{\log k} = k$, as required.

We now argue that each projection hash function $h_i$ is 4-wise independent, and that the projection hash functions are pair-wise independent among themselves.
Since $h_i$ is the xor of $\log k$ 4-wise independent base functions, $h_i$ is 4-wise independent as well.
Moreover, for each $h_i\neq h_j$, there must be at least one pair of base functions in which $h_i$ uses one base function and $h_j$ uses the other base function. Since these base functions are independent, $h_i$ and $h_j$ must be independent as well.

\paragraph{Computing $\beta_{u,v}$.} Set $(u,v)$. Build a balanced binary tree over the $\log k$ pairs of base functions and compute $f_1(u), f_2(u),\ldots f_{2\log k} (u)$ and $f_1(v), f_2(v),\ldots f_{2\log k} (v)$. For each node $w$ in the balanced binary tree let $t_w$ be the number of pairs of base functions at the leaves of the subtree of $w$. For each such $t_w$ pairs of base functions, we consider all $2^{t_w}$ combinations of picking one base function from each pair. Each such combination defines a projection function local to $w$ by taking the xor of the outputs of the base hash functions for that combination.
Let $e_w$ be the number of such local projections for which the projection on $u$ and the projection on $v$ are equal, while $d_w$ is the number of such projections for which the projection on $u$ and the projection on $v$ are not equal.
If $\ell$ and $r$ are the left and right children of $w$, respectively, then $e_w = e_\ell \cdot e_r + d_\ell \cdot d_r$ and $d_w = e_\ell\cdot d_r + d_\ell \cdot e_r$. Finally, since the root of the balanced binary tree $root$ covers all of the $\log k$ pairs of base functions, then $\beta_{u,v} = e_{root}$. Thus using a direct bottom up approach we can compute $\beta_{u,v}$ in $O(\log k)$ time.

\bibliographystyle{plain}
\bibliography{erc}

\end{document}